\documentclass{ws-procs975x65}

\def\3nab{\tilde{\nabla}}

\def\la {\langle}
\def\ra {\rangle}

\def\be {\begin{equation}}
\def\ee {\end{equation}}
\def\ba {\begin{eqnarray}}
\def\ea {\end{eqnarray}}
\newcommand{\bra}[1]{\left(#1\right)}

\newcommand{\brac}[1]{\left\{#1\right\}}

\newcommand{\sfr}[2]{{\textstyle\frac{#1}{#2}}}

\newcommand{\E}{{\mathcal E}}
\renewcommand{\H}{{\mathcal H}}
\newcommand{\barray}{\begin{array}}
\newcommand{\earray}{\end{array}}

\newcommand \ep {\epsilon}

\newcommand \om {\omega}

\newcommand{\udot}{{\mathcal A}}

\begin{document}
\title{Notes on Cosmic Censorship Conjecture revisited: Covariantly}
\author{Hamid, Aymen I. M.$^*$, Goswami, Rituparno$^\dagger$ and Maharaj, Sunil D.$^{\dagger\dagger}$}

\address{Astrophysics and Cosmology Research Unit, School of Mathematics, Statistics and Computer Science, University of KwaZulu-Natal, Private Bag X54001, Durban 4000, South Africa.\\
$^*$E-mail: aymanimh@gmail.com\footnote{Physics Department, Faculty of Science, University of Khartoum, Sudan}\\$^{\dagger}$E-mail:Goswami@ukzn.ac.za\\$^{\dagger\dagger}$E-mail: Maharaj@ukzn.ac.za}

\begin{abstract}
In this paper we study the dynamics of the trapped region using a frame independent semi-tetrad covariant formalism for general Locally Rotationally Symmetric (LRS) class II spacetimes. We covariantly prove some important geometrical results for the apparent horizon, and state the necessary and sufficient conditions for a singularity to be locally naked. These conditions bring out, for the first time in a quantitative and transparent manner, the importance of the Weyl curvature in deforming and delaying the trapped region during continual gravitational collapse, making the central singularity locally visible.
\end{abstract}

\keywords{Gravitational collapse; cosmic censorship conjecture.}

\bodymatter


\section{Introduction}

Since Penrose proposed the famous {\it  Cosmic Censorship Conjecture (CCC)} in 1969 \cite{CCC}, stating that 
singularities observable from the outside will never arise in generic gravitational collapse which starts from a perfectly reasonable 
nonsingular initial state, there have been numerous attempts towards validating this conjecture  by means of a rigorous mathematical treatment. 
However, this conjecture remains unproved, and it has been recognised as one of the most important open problems in gravitational physics. 
The key point here is that the validity of this conjecture will confirm the already widely accepted 
and applied theory of black hole dynamics, which has considerable amount of astrophysical applications. 
On the other hand, it's overturn will throw the black hole dynamics into serious doubt. This is because most of the important fundamental global theorems 
in black hole physics assume that the spacetime manifold is {\it future asymptotically predictable}. In other words this condition ensures that there should be 
no singularity to the future of the partial Cauchy surface which is `naked' or visible from future null infinity \cite{HE}.

Although no conclusive proof or disproof of CCC could be formulated, the quest gave rise to a number 
of counter examples which showed there are shell focusing naked singularities occurring at the centre of spherically symmetric dust, perfect fluids or radiation shells
(see for example Ref.~\refcite{Goswami:2006ph, Joshibook1} and references therein).
We can, in principle, rule out these naked singularities by stating that dust or perfect fluids are not really 
`fundamental' forms of matter field, as their properties are not derived from a `proper' Lagrangian. 
However, if cosmic censorship is to be established as a rigorous mathematical theorem, this objection 
has to be made precise in terms of a clear and simple restriction on the energy momentum tensor of the matter field. This paper discusses the CCC using the semi-tetrad covariant formalisms applied to LRS-II spacetimes.
\section{Semi-tetrad covariant formalisms}
In the 1+3 covariant formalism the covariant derivative of the time like vector $u^a$ can be
decomposed into the irreducible part as $\nabla_au_b=-A_au_b+\frac13h_{ab}\Theta+\sigma_{ab}+\ep_{a b
c}\om^c$, where $A_a=\dot{u}_a$ is the acceleration,
$\Theta=D_au^a$ is the expansion, $\sigma_{ab}=D_{\la a}u_{b \ra}$
is the shear tensor and $w^a=\ep^{a b c}D_bu_c$ is the vorticity
vector. 
The energy momentum tensor for a general matter field can 
be similarly decomposed as follows
\be
T_{ab}=\mu u_au_b+q_au_b+q_bu_a+ph_{ab}+\pi_{ab}\;,
\ee
where $\mu=T_{ab}u^au^b$ is the energy density, $p=(1/3 )h^{ab}T_{ab}$ is the isotropic pressure, $q_a=q_{\la a\ra}=-h^{c}{}_aT_{cd}u^d$ is the 3-vector defining 
the heat flux and $\pi_{ab}=\pi_{\la ab\ra}$ is the anisotropic stress.

We choose to work on the isotropic LRS spacetimes where all vector and tensor vanishes, and the non-zero 1+1+2 variables are covariantly defined scalars are $\brac{\udot, \Theta,\phi, \xi, \Sigma,\Omega, \E, \H, \mu, p, \Pi, Q }$. 
Within the LRS cases is the LRS-II class that admits spherically symmetric solutions and is free of rotation, thus allowing for the vanishing of the variables $\Omega$, $ \xi $ and $ \H $. The set of quantities that fully describe LRS class II spacetimes are $\brac{\udot, \Theta,\phi, \Sigma,\E, \mu, p, \Pi, Q }$. Using the field equations, we can write the evolution and propagation equations for the Gaussian curvature $K$ are $\dot K = -\bra{\sfr23\Theta-\Sigma}K\ , \hat K = -\phi K$.

\section{Apparent Horizon in spherically symmetric spacetimes}
At this point let us define the notion of locally {\it outgoing} and {\it incoming} null geodesics with respect to the preferred spatial direction.
For LRS-II spacetimes the outgoing null vector is defined as $k^a=\frac{E}{\sqrt{2}}\bra{u^a+e^a}$.
Together with the covariant derivative given by \cite{deSwardt:2010nf}
\be
\nabla_bk_a=\sfr12 \tilde{h}_{ab}\tilde{\Theta}_{out}+\tilde{\sigma}_{ab}+\tilde{\omega}_{ab}+\tilde{X}_ak_b+\tilde{Y}_bk_a
+\lambda k_ak_b,
\ee
where $\tilde{X}_a$, $\tilde{Y}_a$ and $\lambda$ are combinations in the above covariant derivative with $e^a$ 
$\tilde{h}_{ab}=N_{ab}$ equivalent to the projection tensor on the 2-sphere, and  $\tilde{\Theta}_{out}$, $\tilde{\sigma}_{ab}$, $\tilde{\omega}_{ab}$ represents the expansion, shear and vorticity 
of the outgoing null congruence respectively. A similar decomposition can be done for the incoming null geodesic $l^a$. In order to define the trapped region of collapsing star, we follow the definition of the boundary of the trapped region which defines the {\it apparent horizon} as in \cite{HE}, and that surface is attained when the volume expansion of the outgoing null congruence vanishes ($\tilde{\Theta}_{out}=0$). 
\begin{proposition}
For any spherically symmetric spacetime $(\mathcal{M},g)$ that allows a local 1+1+2 splitting, the apparent horizon is described by the 
curve $ \bra{\sfr23\Theta-\Sigma+\phi}=0 $, while the cosmological horizon is described by $ \bra{\sfr23\Theta-\Sigma-\phi}=0 $, in the local $[u,e]$ plane.
\end{proposition}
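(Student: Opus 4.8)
The plan is to evaluate the null sheet expansion $\tilde{\Theta}_{out}$ explicitly in terms of the LRS-II scalars and show that it is a nonzero multiple of $\bra{\sfr23\Theta-\Sigma+\phi}$, so that the marginally-trapped condition $\tilde{\Theta}_{out}=0$ reduces to the stated curve. First I would pin down that $\tilde{\Theta}_{out}$ really is the trace over the $2$-sheet: contracting the quoted identity $\nabla_bk_a=\sfr12\tilde h_{ab}\tilde{\Theta}_{out}+\tilde\sigma_{ab}+\tilde\omega_{ab}+\tilde X_ak_b+\tilde Y_bk_a+\lambda k_ak_b$ with $N^{ab}=\tilde h^{ab}$ and using $N^{ab}N_{ab}=2$, the tracefreeness of $\tilde\sigma_{ab}$ on the sheet, the antisymmetry of $\tilde\omega_{ab}$, and $N^{ab}k_a=0$ (both $u^a$ and $e^a$ being orthogonal to $N_{ab}$), every term on the right collapses except the first, leaving the working identity $\tilde{\Theta}_{out}=N^{ab}\nabla_bk_a$.

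Next I would substitute $k_a=\sfr{E}{\sqrt2}\bra{u_a+e_a}$. The piece carrying $\nabla_bE$ is killed on contraction with $N^{ab}$ since $N^{ab}(u_a+e_a)=0$, so $\tilde{\Theta}_{out}=\sfr{E}{\sqrt2}\,N^{ab}\nabla_b\bra{u_a+e_a}$. For the $u$-part, insert the $1+3$ decomposition $\nabla_bu_a=-A_bu_a+\sfr13 h_{ab}\Theta+\sigma_{ab}+\ep_{bac}\om^c$: on LRS-II $\om^c=0$, the acceleration term dies against $N^{ab}u_a=0$, the identity $N^{ab}h_{ab}=2$ produces the $\sfr23\Theta$ term, and writing the shear as $\sigma_{ab}=\Sigma\bra{e_ae_b-\sfr12 N_{ab}}$ gives $N^{ab}\sigma_{ab}=-\Sigma$, whence $N^{ab}\nabla_bu_a=\sfr23\Theta-\Sigma$. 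For the $e$-part, the $N^{ab}$-trace of $\nabla_be_a$ picks out precisely the sheet-expansion scalar: on LRS-II the sheet distortion, twist and sheet-acceleration of $e^a$ all vanish, so $N^{ab}\nabla_be_a=\tilde\nabla_ae^a=\phi$. Collecting, $\tilde{\Theta}_{out}=\sfr{E}{\sqrt2}\bra{\sfr23\Theta-\Sigma+\phi}$; since $E\neq0$ this vanishes iff $\bra{\sfr23\Theta-\Sigma+\phi}=0$, which is the apparent horizon in the $[u,e]$ plane.

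The cosmological-horizon statement follows from the same computation applied to the incoming null direction $l_a=\sfr{E}{\sqrt2}\bra{u_a-e_a}$: replacing $e_a\to-e_a$ flips only the $\phi$ contribution, so $\tilde{\Theta}_{inc}=\sfr{E}{\sqrt2}\bra{\sfr23\Theta-\Sigma-\phi}$, and the surface on which the incoming (anti-trapping) congruence ceases to converge is $\bra{\sfr23\Theta-\Sigma-\phi}=0$. I expect the only delicate points to be (i) justifying that $\tilde{\Theta}_{out}$ is the $2$-sheet trace rather than the full four-divergence $\nabla_ak^a$ — the latter also carries the inaffinity/boost data $\tilde X_a,\tilde Y_a,\lambda$, which must be checked to decouple under $N^{ab}$-contraction — and (ii) keeping the shear normalization consistent ($\sfr12 N_{ab}$ versus $\sfr13 h_{ab}$) so that the coefficient of $\Sigma$ comes out to exactly $-1$; identifying which root is the apparent and which the cosmological horizon is then the usual matter of the sign of the complementary null expansion on either side, as in Ref.~\refcite{HE}.
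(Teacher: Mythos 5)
Your proposal is correct and follows exactly the paper's route: the paper simply asserts the result of ``working out'' $\tilde{\Theta}_{out}=\tilde{h}^{ab}\nabla_bk_a=E\bra{\sfr23\Theta-\Sigma+\phi}$ and then concludes, while you supply the intermediate contractions (killing the $\tilde X_a k_b$, $\tilde Y_b k_a$, $\lambda k_ak_b$, shear and vorticity terms under $N^{ab}$, and evaluating the $u$- and $e$-parts separately). The only cosmetic difference is the overall normalization ($\sfr{E}{\sqrt2}$ versus the paper's $E$), which is irrelevant since either prefactor is nonzero.
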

\begin{proof}
Working out $\tilde{\Theta}_{out}$ we conclude that $\tilde{\Theta}_{out}=\tilde{h}^{ab}\nabla_bk_a=E\bra{\sfr23\Theta-\Sigma+\phi}$.
Hence for a null congruence with non-zero energy $E$, $\tilde{\Theta}_{out}=0$ implies that $ \bra{\sfr23\Theta-\Sigma+\phi}=0 $.
$\tilde{\Theta}_{in}=0$ will then imply $ \bra{\sfr23\Theta-\Sigma-\phi}=0 $.
 \end{proof}
\begin{proposition}
For any spherically symmetric spacetime $(\mathcal{M},g)$ that allows a local 1+1+2 splitting, the gradient of the Gaussian curvature of 
the 2-sheets that intersect with the apparent (or cosmological) horizon is null.
\end{proposition}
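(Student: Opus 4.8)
The plan is to compute the spacetime gradient $\nabla_a K$ explicitly in the local $1+1+2$ frame and then read off its norm on the horizon curve. Because the spacetime is LRS, $K$ is one of the covariantly defined scalars and is constant along each $2$-sheet, so its gradient has no component tangent to the sheets, i.e. $N_a{}^b\nabla_b K=0$. Hence $\nabla_a K$ lives entirely in the local $[u,e]$ plane, and projecting onto $u^a$ and $e^a$ (using $u^au_a=-1$, $e^ae_a=1$, $u^ae_a=0$) together with the evolution and propagation equations $\dot K=-\bra{\sfr23\Theta-\Sigma}K$, $\hat K=-\phi K$ quoted in the excerpt gives
\be
\nabla_a K=-\dot K\,u_a+\hat K\,e_a=\bra{\sfr23\Theta-\Sigma}K\,u_a-\phi K\,e_a\,.
\ee

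From here there are two equivalent routes. The first is to form the norm directly,
\be
\nabla^a K\,\nabla_a K=-\dot K^{2}+\hat K^{2}=K^{2}\bras{\phi^{2}-\bra{\sfr23\Theta-\Sigma}^{2}}\,,
\ee
and then invoke Proposition~1: on the apparent horizon $\sfr23\Theta-\Sigma=-\phi$ and on the cosmological horizon $\sfr23\Theta-\Sigma=+\phi$, so in either case $\bra{\sfr23\Theta-\Sigma}^{2}=\phi^{2}$ and the right-hand side vanishes along the horizon curve, proving $\nabla_a K$ is null there. The second, more transparent route is to contract with the outgoing null vector: $k^a\nabla_a K=\tfrac{E}{\sqrt2}\bra{\dot K+\hat K}=-\tfrac{E}{\sqrt2}K\bra{\sfr23\Theta-\Sigma+\phi}$, which vanishes exactly on the apparent horizon; since $\nabla_a K$ is then orthogonal to the outgoing null direction and lies in the $[u,e]$ plane, it must be proportional to the incoming null covector $l_a$, hence null. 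The analogous contraction with $l^a$ handles the cosmological horizon. I would present both, using the norm computation as the formal proof and the null-direction contraction as the geometric interpretation. It is also worth remarking that $\nabla_a K$ is generically non-zero during collapse (both $\dot K$ and $\hat K$ are non-vanishing), so the content of the statement is that a genuinely non-zero gradient degenerates to a lightlike covector precisely on the horizon.

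The only step needing care — the closest thing to an obstacle — is the first one: justifying that $\nabla_a K$ has vanishing $2$-sheet component and that its $u$- and $e$-parts are exactly $-\dot K$ and $\hat K$. This is where the LRS-II structure does the real work: all geometric information is carried by the scalars $\brac{\udot,\Theta,\phi,\Sigma,\E,\mu,p,\Pi,Q}$, each invariant along the sheets, and $K$ (equivalently the area radius of the sheets) is such a scalar, so the decomposition $\nabla_a K=-(u^b\nabla_b K)u_a+(e^b\nabla_b K)e_a$ holds by the very construction of the semi-tetrad frame. Once this is granted, the remainder is the short algebra above plus a direct appeal to Proposition~1, and nothing further is required.
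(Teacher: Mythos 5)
Your proof is correct and takes essentially the same route as the paper, which likewise establishes the result by computing $\nabla_a K\nabla^aK=\bra{\sfr23\Theta-\Sigma+\phi}\bra{\sfr23\Theta-\Sigma-\phi}K^2$ (your expression agrees with this up to an overall signature-convention sign, which is immaterial for nullity) and observing that one factor vanishes on each horizon; you merely make explicit the decomposition $\nabla_aK=-\dot K u_a+\hat K e_a$ and the use of the evolution and propagation equations that the paper leaves implicit. One small slip in your supplementary geometric remark: a covector in the $[u,e]$ plane annihilated by $k^a$ is proportional to $k_a$ itself rather than to $l_a$, since $k^ak_a=0$ while $k^al_a\neq0$; the conclusion that the gradient is null is of course unaffected.
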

\begin{proof}
We calculate the quantity $\nabla_a K\nabla^aK$ for a spherically symmetric spacetime (where $K\ne 0$):
\be
\nabla_a K\nabla^aK=\bra{\sfr23\Theta-\Sigma+\phi}\bra{\sfr23\Theta-\Sigma-\phi}K^2\label{Geq}\;.
\ee
Hence for the 2-sheets intersecting the horizon (apparent or cosmological), the gradient of their Gaussian curvature is null.
\end{proof}
As we are considering the scenario of gravitational collapse of massive stars, henceforth we will only concentrate on the apparent horizon. We have already seen that the 
curve $\Psi\equiv\frac23\Theta-\Sigma+\phi=0$,
describes the apparent horizon. Let the vector $\Psi^a=\alpha u^a+\beta e^a$ be the tangent to the curve in the local $[u,e]$ plane. Then it is easy to obtain by the field equations
\be\label{alph}
\frac{\alpha}{\beta}=\frac{\sfr23\mu+\sfr12\Pi+{\cal E}-Q}{-\sfr13\mu-p+{\cal E}-\sfr12 \Pi+Q}\;.
\ee
It is interesting to note that the thermodynamical quantities together with the Weyl scalar completely determine the tangent to the apparent horizon. 
We will define the apparent horizon to be locally {\it outgoing} at a point $p\in[u,e]$, if the slope of the tangent to the horizon is positive at $p$, that is 
$\sfr{\alpha}{\beta}>0$. Let the point $p$ be labelled by the values of the local coordinates ($t_0,\chi_0$) which are the affine parameters along the 
integral curves of $u^a$ and $e^a$ 
respectively. Then a locally outgoing apparent horizon at $p$ would imply that the 2-sheets (spherical shell) labelled by $\chi_0+\epsilon$ will get trapped 
later than $t=t_0$, while the 2-sheet labelled by $\chi_0$ gets trapped at $t=t_0$. Finally as we can easily see that the sign of the scalar $\Psi^a\Psi_a$, determines whether the 
curve $\Psi=0$ is timelike, spacelike or null in the $[u,e]$ plane. Hence $\sfr{\alpha^2}{\beta^2}>(<)1$ denotes the horizon to be locally timelike (spacelike). If $\sfr{\alpha^2}{\beta^2}=1$ then the horizon is null.
We now would like to drive the necessary and sufficient condition for existence of a locally naked singularity. 

\begin{proposition}
Consider the continued collapse of a general spherically symmetric matter cloud from a regular initial epoch which obeys the physically reasonable energy conditions.
If the following conditions are satisfied:
\begin{enumerate}
\item The spacetime is free of shell crossing singularities,
\item Closed trapped surfaces exist,
\end{enumerate}
then the necessary and sufficient condition for the central singularity to be locally naked is that the slope of the tangent to the apparent horizon at the central singularity is
positive and non spacelike ($\sfr{\alpha}{\beta}\ge1$).
\end{proposition}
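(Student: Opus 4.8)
The plan is to reduce this to a purely local statement — as it must be, since local nakedness is by definition a property of a neighbourhood of the singular point — comparing in the $[u,e]$ plane, near the centre, two curves through $(t_0,0)$: the apparent horizon $\Psi\equiv\sfr23\Theta-\Sigma+\phi=0$ and an outgoing radial null geodesic issuing from the central singularity. First I would fix local coordinates $(t,\chi)$ adapted to the $1+1+2$ split, with $t$ running along $u^a$ and $\chi$ along $e^a$. Because there are no shell-crossing singularities, $\hat K=-\phi K$ is nowhere zero, so the area radius $K^{-1/2}$ is strictly monotone along $e^a$; hence $\chi$ is a genuine radial label, the centre is the single worldline $\chi=0$, and the singularity forms there at a finite time $t=t_0$. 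In these coordinates an outgoing radial null geodesic has $dt/d\chi$ equal to some positive function $c(t,\chi)$ (the local light speed), while the horizon, whose tangent in the plane is $\Psi^a=\alpha u^a+\beta e^a$, has $dt/d\chi=(\alpha/\beta)\,c(t,\chi)$ with the ratio given by \reff{alph}; equivalently $\Psi^a\Psi_a=\beta^2-\alpha^2$, so the condition $\alpha/\beta\ge1$ says precisely that the horizon curve is non-spacelike — causal — at the centre. Finally, as already established (and since $\tilde\Theta_{out}=E\Psi$ with $E>0$), the curve $\Psi=0$ is the apparent horizon separating the trapped region $\Psi<0$ from the untrapped region $\Psi>0$.

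Next I would use the remaining hypotheses to locate the horizon relative to the singularity. The regular initial data together with the energy conditions ensure that the central shell is the first to become singular and the first to become trapped; with closed trapped surfaces assumed to exist, this makes $(t_0,0)$ an endpoint of the curve $\Psi=0$, i.e. the apparent horizon emanates from the central singularity. (If the limiting radial position of $\Psi=0$ were instead attained at some $t_1<t_0$, the centre would already lie in the trapped region at the instant it turns singular and the singularity would be trivially censored, which is consistent with the criterion to be proved.)

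The core of the argument is the local comparison at $(t_0,0)$. For small $\chi>0$ the outgoing null geodesic satisfies $t=t_0+\int_0^\chi c\,d\chi'=t_0+c(t_0,0)\,\chi+O(\chi^2)$, while the horizon satisfies $t=t_0+(\alpha/\beta)\,c(t_0,0)\,\chi+O(\chi^2)$, where $\alpha/\beta$ is the limit of \reff{alph} along $\Psi=0$ as $\chi\to0^+$. If $\alpha/\beta\ge1$ the horizon lies on or above the null geodesic near the centre — the geodesic reaches each shell $\chi=\mathrm{const}$ at or before the moment that shell becomes trapped — so part of the causal future of $(t_0,0)$ lies outside the trapped region: an outgoing null ray escapes every neighbourhood of the singularity without entering the trapped region $\Psi<0$, hence the singularity is visible to nearby observers and is locally naked, the horizon being timelike there for $\alpha/\beta>1$ and null for $\alpha/\beta=1$. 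Conversely, if $\alpha/\beta<1$ the horizon lies strictly below the null direction at the centre, so the same first-order estimate applied to the entire family of outgoing null geodesics through $(t_0,0)$ shows that every one of them enters $\Psi<0$ immediately; the whole causal future of the singularity is then trapped, no causal curve can carry information from it to an untrapped point, and the singularity is locally censored. This establishes both the necessity and the sufficiency of $\alpha/\beta\ge1$.

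I expect the main obstacle to be the rigour at the singular endpoint. First one must show that the quotient \reff{alph} actually has a limit as $\Psi=0$ runs into $(t_0,0)$: although $\mu$, $p$, $\mathcal{E}$, $\Pi$ and $Q$ each blow up there, their combination is controlled by the evolution and propagation equations for $K$ and for the matter scalars near the centre, and it is this that makes ``the slope of the horizon at the central singularity'' well defined. Second, the ``$\Leftarrow$'' half is delicate: concluding censorship when $\alpha/\beta<1$ requires uniform control of the sign of $\Psi$ in a full neighbourhood of $(t_0,0)$, so that $\Psi<0$ genuinely persists along \emph{every} outgoing null geodesic from the centre and not merely along the radial one — this is where the energy conditions and the no-shell-crossing hypothesis are really needed.
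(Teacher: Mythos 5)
Your proposal is correct and follows essentially the same route as the paper: a slope-comparison in the local $[u,e]$ plane at the central singularity between the apparent-horizon curve $\Psi=0$ and the outgoing radial null direction, with the no-shell-crossing hypothesis guaranteeing that the centre is the first shell to become singular and trapped. Your write-up is in fact more explicit than the paper's about the setup and about the remaining rigour issues (existence of the limit of $\alpha/\beta$ at the singular endpoint, uniform control of the sign of $\Psi$), which the paper's proof also leaves implicit.
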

\begin{proof} 
Let the central singularity be denoted by ($t=t_{s_0}$, $\chi=0$) in the $[u,e]$ plane. The key point here is that there should be available an untrapped region in the local neighbourhood 
of the central singularity for a null geodesic with the past end point arbitrarily near the central singularity to escape. We have assumed here that the singularity curve is a non-decreasing  function of the affine parameter of the integral curve of the vector $e^a$ (see \cite{Goswami:2002he}), and hence no other collapsing shell becomes singular before the central shell. If the apparent horizon at the central singularity is ``{\em ingoing}", that is $\sfr{\alpha}{\beta}<0$, then the neighbourhood of the centre gets trapped before the central singularity and no null geodesic from a point arbitrarily close to the central singularity can escape. Also 
if the apparent horizon is ``{\em outgoing}" but spacelike, that is $0\le\sfr{\alpha}{\beta}<1$, then any outgoing null direction from the central singularity will be necessarily within the trapped region. Hence for these cases, any null geodesic from a point arbitrarily close to the central singularity will have $\tilde\Theta_{out}<0$ and hence they will fall to the 
singularity. Therefore the necessary condition for a singularity to be  locally naked is that the slope of the tangent to the apparent horizon at the central singularity is
positive and non-spacelike ($\sfr{\alpha}{\beta}\ge1$).
Conversely, suppose there exist a family of future directed null geodesics that has escaped from the points arbitrarily close to the central singularity in the $[u,e]$ plane. Then that would imply these points are non-trapped and the slope of the apparent horizon curve at the central singularity is greater than (or equal to) the slope of these outgoing null geodesics in order for them to escape. Hence $\sfr{\alpha}{\beta}\ge1$ is the necessary and sufficient condition for the singularity to be locally naked.
\end{proof} 
\begin{proposition}
Consider the gravitational collapse of a spherically symmetric perfect fluid obeying strong energy conditions $\mu\ge0$ and $\mu+3p\ge0$.
If the following conditions are satisfied:
\begin{enumerate}
\item The spacetime is free of shell crossing singularities,
\item Closed trapped surfaces exist,
\item The central singularity is marginally naked ($\sfr{\alpha}{\beta}=1$),
\end{enumerate}
then the limit of $\sfr{|{\cal E}|}{\mu+p}$ at the central singularity along the apparent horizon curve diverges.
\end{proposition}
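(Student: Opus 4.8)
The plan is to turn the statement into a short asymptotic estimate built on the slope formula \reff{alph} together with the strong energy conditions, so the work is essentially algebraic once conditions~1 and~2 are in force. For a perfect fluid the anisotropic stress and heat flux vanish, $\Pi=Q=0$, so \reff{alph} collapses to $\sfr{\alpha}{\beta}=N/D$ with $N\equiv\sfr23\mu+{\cal E}$ and $D\equiv-\sfr13\mu-p+{\cal E}$. The observation that drives everything is that numerator and denominator differ by exactly the combination appearing in the target ratio, $N-D=\mu+p$. Hence the hypothesis that the singularity is marginally naked, $\sfr{\alpha}{\beta}\to1$ along the apparent horizon curve $\Psi=0$ as $\chi\to0$, is the same as $\sfr{N-D}{D}\to0$, i.e.
\be
\lim_{\chi\to0}\frac{\mu+p}{-\sfr13\mu-p+{\cal E}}=0\;,
\ee
so $\mu+p=o\bra{|D|}$ along the curve; the finiteness of the limiting slope also guarantees $D\neq0$ in a punctured neighbourhood of $(t_{s_0},0)$.

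Next I would feed in the strong energy conditions. From $\mu+3p\ge0$ we get $\sfr13\mu+p\ge0$, and since $\mu\ge0$ we may write $\mu+p=\sfr23\mu+\bra{\sfr13\mu+p}$, which yields the two-sided bound $0\le\sfr13\mu+p\le\mu+p$. Thus $\sfr13\mu+p$ is squeezed between $0$ and $\mu+p$, and the previous step already forced $\mu+p=o(|D|)$, so $\sfr13\mu+p=o(|D|)$ as well. Since $D={\cal E}-\bra{\sfr13\mu+p}$ this means ${\cal E}=D\bra{1+o(1)}$, whence $|{\cal E}|/|D|\to1$; in particular ${\cal E}$ is nonzero near the central singularity and asymptotically equal to the denominator. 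Combining the two estimates,
\be
\frac{\mu+p}{|{\cal E}|}=\frac{\mu+p}{|D|}\cdot\frac{|D|}{|{\cal E}|}\longrightarrow0\cdot1=0\;,
\ee
and therefore $\sfr{|{\cal E}|}{\mu+p}\to\infty$ at the central singularity along the apparent horizon, which is the claim.

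The calculation is short because conditions~1 and~2 supply the geometry for free: absence of shell-crossing singularities keeps $\mu$, $p$, ${\cal E}$ regular off the centre, and the existence of closed trapped surfaces guarantees that $\Psi=0$ is genuinely present and terminates at $(t_{s_0},0)$, so the limits taken along it are meaningful — and rigorously justifying the existence and regularity of $\lim_{\chi\to0}\sfr{\alpha}{\beta}$ along this curve is the one place real care is needed. The genuine conceptual point, more than an obstacle, is that the strong energy condition is exactly what the argument exploits: it is precisely the inequality $\sfr13\mu+p\le\mu+p$ that makes $\sfr13\mu+p$ subordinate to $\mu+p$ inside $D$, and with only the weak energy condition (where $p$ may be large and negative) this squeeze, and with it the whole chain, breaks down. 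One should also note the degenerate branch $\mu+p\to0$, which under the strong energy condition additionally forces $\mu\to0$ and $p\to0$: the estimate above still delivers the conclusion, with the divergent ratio read as $+\infty$.
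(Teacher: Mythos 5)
Your argument is correct and essentially reproduces the paper's proof in a different normalization: the paper rewrites $\sfr{\alpha}{\beta}-1$ as $\left[\frac{{\cal E}}{\mu+p}-\frac13\frac{\mu+3p}{\mu+p}\right]^{-1}$ and invokes the strong-energy-condition bound $0\le\frac{\mu+3p}{\mu+p}\le3$ to conclude that $\frac{{\cal E}}{\mu+p}$ must diverge, while you divide by the denominator $D$ and use the equivalent squeeze $0\le\frac13\mu+p\le\mu+p$. The only substantive (and welcome) difference is that your version stays meaningful on the degenerate branch $\mu+p\to0$, where the paper's ratio $\frac{\mu+3p}{\mu+p}$ becomes indeterminate.
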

\begin{proof}
We know that for a perfect fluid we have $Q=\Pi=0$, and at the central singularity $\sfr{\alpha}{\beta}=1$ implies
\be
\left[\frac{{\cal E}}{\mu+p}-\frac13\frac{\mu+3p}{\mu+p}\right]^{-1}=0.
\ee
For the perfect fluid satisfying the strong energy conditions, $\sfr{\mu+3p}{\mu+p}$ is finite and hence $\sfr{|{\cal E}|}{\mu+p}$ at the central singularity 
along the apparent horizon tends to infinity.
\end{proof}
The above result clearly shows that the electric part of the Weyl scalar (which is responsible for the tidal forces) must diverge faster than the energy density along the 
apparent horizon curve, for a singularity to be locally naked. It also highlights the importance of tidal forces in delaying the trapping. In fact, this results closely relates to the result obtained in Ref.~\refcite{Joshi:2001xi}. 
\begin{corollary}
Consider the continued gravitational collapse of a spherically symmetric perfect fluid obeying the strong energy condition $\mu\ge0$ and $\mu+3p\ge0$. If the 
spacetime is conformally flat then the end state of the collapse
is necessarily a black hole.
\end{corollary}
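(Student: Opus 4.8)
The plan is to argue by contradiction, combining Proposition 4 with the conformal-flatness hypothesis. Suppose, for the sake of contradiction, that the end state of the collapse is not a black hole, i.e.\ the central singularity is (at least marginally) naked, so that $\sfr{\alpha}{\beta}\ge1$ at the central singularity. Since we are dealing with a perfect fluid, we already have $Q=\Pi=0$, so the slope formula \reff{alph} reduces to
\be
\frac{\alpha}{\beta}=\frac{\sfr23\mu+{\cal E}}{-\sfr13\mu-p+{\cal E}}\;.
\ee
The first step is to observe that the case $\sfr{\alpha}{\beta}=1$ is covered directly by Proposition 4, whose conclusion is that $\sfr{|{\cal E}|}{\mu+p}$ diverges along the apparent horizon curve at the central singularity; the slightly more general inequality $\sfr{\alpha}{\beta}>1$ should be handled by the same manipulation, since it again forces the denominator $-\sfr13\mu-p+{\cal E}$ to be small compared with the numerator, so that ${\cal E}$ cannot remain bounded relative to $\mu+p$ in the limit.

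The second step is to invoke conformal flatness: for a conformally flat spacetime the full Weyl tensor vanishes, so in particular its electric part vanishes, which in the $1+1+2$ scalar language means ${\cal E}\equiv0$ everywhere. Substituting ${\cal E}=0$ into the reduced slope formula gives
\be
\frac{\alpha}{\beta}=\frac{\sfr23\mu}{-\sfr13\mu-p}=-\frac{2\mu}{\mu+3p}\;,
\ee
and under the strong energy conditions $\mu\ge0$ and $\mu+3p\ge0$ this quantity is manifestly non-positive (strictly negative wherever $\mu>0$). Hence $\sfr{\alpha}{\beta}\le0$, which contradicts $\sfr{\alpha}{\beta}\ge1$. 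Equivalently, one can run the argument through Proposition 4 directly: with ${\cal E}=0$ the ratio $\sfr{|{\cal E}|}{\mu+p}$ is identically zero and certainly does not diverge, so the marginally-naked condition cannot hold. Either way, a naked (or marginally naked) central singularity is impossible, so by Proposition 3 the trapped region forms in time and the collapse terminates in a black hole.

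**The main obstacle** I anticipate is the borderline degenerate case where $\mu+3p\to0$ at the central singularity (so that the strong energy condition is saturated in the limit), since then the displayed expression $-2\mu/(\mu+3p)$ could a priori blow up to $+\infty$ rather than staying non-positive; one must check that the relevant limit is taken along the apparent horizon curve and that saturation of $\mu+3p\ge0$ together with $\mu\ge0$ still keeps the slope from entering the range $[1,\infty)$ — in fact $-2\mu/(\mu+3p)$ is negative whenever $\mu+3p>0$ and becomes $-\infty$ (not $+1$) as $\mu+3p\to0^+$ with $\mu>0$, so the conclusion survives, but this limiting behaviour deserves an explicit remark rather than being swept under the rug. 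A secondary technical point is ensuring that the hypotheses of Proposition 3 (no shell-crossing singularities, existence of closed trapped surfaces) are either inherited from the corollary's standing assumption of ``continued gravitational collapse'' or can be supplied; assuming closed trapped surfaces do form, the contradiction with $\sfr{\alpha}{\beta}\ge1$ then yields that the central singularity is censored and the end state is a black hole.
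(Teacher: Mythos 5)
Your proposal is correct and follows essentially the same route as the paper: set $Q=\Pi={\cal E}=0$ in the slope formula \reff{alph}, observe that under the strong energy conditions the resulting $\sfr{\alpha}{\beta}=-2\mu/(\mu+3p)$ is non-positive, and conclude via Proposition 3 that the singularity cannot be locally naked. The paper phrases the contradiction as $\sfr{\alpha}{\beta}\ge1\Rightarrow\mu+p\le0$ and adds a classification of the horizon via the sign of $\Psi^a\Psi_a$, but the substance is identical to your direct sign argument.
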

\begin{proof}
Conformally flat spacetime implies vanishing of the Weyl tensor. Hence we have ${\cal E}=0$. Also for a perfect fluid $Q=\Pi=0$. We therefore have $\frac{\alpha}{\beta}=\frac{-\sfr23\mu}{\sfr13\mu+p}$.
Now the condition $\sfr{\alpha}{\beta}\ge1$ implies $\mu+p\le0$ which violates the strong energy condition.
In fact we can explicitly calculate the norm of the tangent to show that $\Psi^a\Psi_a \propto -\frac{1}{3}(\mu+p)(\mu-3p)$.
If the strong energy condition is satisfied  with $\mu+p>0$, then we have the following cases: If $\mu>3p$ then the $\Psi^a$ is ``ingoing'' timelike, whereas if $\mu=3p$ then the $\Psi^a$ is ``ingoing'' null, and finally if $\mu<3p$ then the $\Psi^a$ is ``ingoing'' spacelike.
In all these cases the region around the centre gets trapped before the central singularity. Hence the singularity is always covered and the collapse end state 
is always a black hole.
\end{proof}
Some examples that can be applied to the above considerations are found in Ref.~\refcite{Hamid:2014kza}

\section{Discussion}

In this paper, working in a covariant  and frame independent formalisms, we successfully identified the physical and geometrical mechanisms responsible 
for delaying the trapped surface formation and making the central singularity locally naked during the continued gravitational collapse of a massive star. 
By working out the dynamics of the trapped region we transparently and quantitatively identified the role of Weyl curvature in deforming the trapped region in such a 
way that the singularity can be naked. As we know the Weyl curvature is responsible for the tidal force between nearby geodesics that generates the 
spacetime shear. In fact from the field equations of LRS-II spacetimes we can immediately see that the Weyl scalar is the source term for the 
shear evolution equation. Spacetime shear then deforms the apparent horizon and delays the trapping as shown in  Ref.~\refcite{Joshi:2001xi, Joshi:2004tb}.


\section*{Acknowledgments}

AH would like to thank Radouane Gannouji for the useful discussions. AH and RG are supported by National Research Foundation (NRF), South Africa. SDM 
acknowledges that this work is based on research supported by the South African Research Chair Initiative of the Department of
Science and Technology and the National Research Foundation.

\end{document}